\documentclass[10pt,journal]{IEEEtranTCOM}
\usepackage{graphicx} 
\usepackage[english]{babel}
\usepackage{amsthm}
\usepackage{amsmath}
\usepackage{amsfonts}
\usepackage{mathtools}

\newtheorem{theorem}{Theorem}[section]

\newcommand{\be}{\begin{equation}}
\newcommand{\ee}{\end{equation}}
\newcommand{\la}{\langle}
\newcommand{\ra}{\rangle}

\oddsidemargin -0.7cm \textwidth 19cm
\topmargin     -2.4 cm 
\headsep  0.2in \textheight  10.2in
\footskip      1.2cm
\parskip       0ex
\marginparwidth 1in
 \marginparsep 0in
\hoffset -0.2in
\columnsep 0.7cm
\addtolength{\textfloatsep}{-0.1in} 

\pagestyle{headings}

\title{No-cloning theorem for 2WQC and postselection}
\author{\IEEEauthorblockN{Mah Noor}\\
\IEEEauthorblockA{Higher Education Department, Punjab, Pakistan, \emph{anwar.mahnoor@yahoo.com}}\\
\IEEEauthorblockN{Jarek Duda}\\
\IEEEauthorblockA{Faculty of Mathematics and Computer Science, Jagiellonian University, Krakow, Poland, \emph{dudajar@gmail.com}}}
\date{22 July 2024}

\begin{document}
\maketitle
\begin{abstract}
Two-way quantum computers (2WQC) are proposed extension of standard 1WQC: adding conjugated state preparation operation $\langle 0|$ similar to postselection $|0\ra \langle 0|$, by performing a process which from perspective of CPT symmetry is the original state preparation process, for example by reversing EM impulses used for state preparation. As there were concerns that this extension might violate no-cloning theorem for example for attacks on quantum cryptographic protocols like BB84, here we extend the original proof to show this theorem still holds for 2WQC and postselection.    
\end{abstract}

\section{Introduction}
No-cloning theorem~\cite{no-cloning} is seen as a crucial property of quantum physics, for example being at heart of security of quantum cryptographic protocols like BB84~\cite{BB84}. 

Its standard version focuses on one-way quantum computers 1WQC~\cite{1WQC}: using state preparation $|0\ra$, unitary evolution and measurement. However, recently there was proposed 2WQC extension~\cite{2WQC}: adding conjugated state preparation $\la 0 |$ to the allowed set of operations. Suggested realization is by exploiting CPT symmetry, which switches them: $|0\ra \xleftrightarrow{\text{CPT}} \la 0|$, and is required by local Lorentz invariant quantum field theories~\cite{CPT}. 

Therefore, performing a process which from CPT symmetry perspective becomes the original state preparation process, we should get its conjugated version. For example for silicon quantum dots~\cite{dots} state preparation is just EM impulses, hence using reversed EM impulses at the end, from CPT symmetry perspective they become the original state preparation process.

Such $\la 0|$ operation mathematically acts as postselection $|0\ra \la 0|$, however, it replaces selection requiring multiple runs, with physical constraints exactly as in state preparation, in theory allowing the same calculations in a single run.

In theory 2WQC are more powerful, for example allowing to solve postBQP~\cite{postBQP} problems including NP in a single run, or for faster and more stable Grover algorithm~\cite{Grover}.

As 2WQC extends the set of available operations, limitations proven for 1WQC generally might not necessarily hold - they should be revisited, extending the proofs, or maybe finding counterarguments. Here we show that no-cloning theorem still holds, forbidding also postselection-based attacks on protocols like BB84. In contrast, for example for Closed Timelike Curves (CTC) there are claims of possible state cloning~\cite{CTC}. 

\section{No-cloning theorem and extension to 2WQC}
For intuitions we will start with simplified proofs for single qubits, then generalize and formalize.
\subsection{Single qubit simplified proofs}\label{simple}
\subsubsection{1WQC case} Assume we have $U$ two qubit  operator able to clone any single qubit state: $U|v\ra|0\ra=|v\ra|v\ra$, and let us show it leads to contradiction. 

\noindent For example it has to clone $|0\ra, |1\ra$ basis states:
\be U|0\ra|0\ra=|0\ra|0\ra \qquad\qquad U|1\ra|0\ra=|1\ra|1\ra \ee
Decomposing $|\phi\ra=a|0\ra +b|1\ra$ in this basis we get:
\be U|\phi\ra | 0\ra= |\phi\ra |\phi\ra = a^2 |0\ra|0\ra +ab |0\ra|1\ra +ba |1\ra|0\ra + b^2|1\ra|1\ra \label{eI}\ee
However, directly using linearity of $U$, we get instead:
\be U |\phi\ra |0\ra =U(a|0\ra +b|1\ra)|0\ra=
a|0\ra|0\ra+b|1\ra |1\ra \label{eII}\ee
Comparing (\ref{eI}) and (\ref{eII}) equations above, we get $ab=0$, hence at least one of them is zero, what is in contradiction with assumption that $U$ can clone any state.\\

\subsubsection{2WQC case}Here we add third qubit initialized to $|0\ra$, which is finally projected/postselected to $\la 0|$. This (partial) projection can reduce amplitude generally depending on cloned $|\phi\ra=a|0\ra +b|1\ra$, which additionally would need to contain some $0<|c_\phi|\leq 1$ amplitude. Finally, for $\la 0|$ projecting to 3rd qubit, the assumed cloning operator $U$ would require below:
\be \la 0|U|\phi\ra |0\ra |0\ra = c_\phi |\phi\ra |\phi\ra\quad\textrm{for some}\quad 0<|c_\phi|\leq 1\ee
Analogously to 1WQC case, we get the below equations:
\be \la 0| U|0\ra|0\ra|0\ra=c_0 |0\ra|0\ra \qquad \la 0|U|1\ra|0\ra|0\ra=c_1|1\ra|1\ra \ee
and the (\ref{eI}) and (\ref{eII}) equations become analogous:
\be \la 0|U|\phi\ra| 0\ra|0\ra=\label{sim}\ee
$$=c_\phi|\phi\ra |\phi\ra = c_\phi(a^2 |0\ra|0\ra +ab |0\ra|1\ra +ba |1\ra|0\ra + b^2|1\ra|1\ra)$$
$$=\la 0|U|(a|0\ra +b|1\ra)| 0\ra|0\ra=c_0 a  |0\ra|0\ra + c_1 b  |1\ra|1\ra $$
Their equality still requires $ab=0$, being in contradiction with assumption that $U$ is universal cloning operator.

\subsection{General version with proof}
Below is standard (1WQC) formulation of no-cloning theorem (usually $|e\rangle=|00\ldots 0\rangle$), on Hilbert space $H=H_A=H_B$, which for cloning needs to be equal for both states. We will further extend it to 2WQC and prove as generalized.
\begin{theorem}No-cloning (1WQC): There is no unitary operator U on $H\otimes H$ and normalized $|e\rangle_B$ in $H$, such that for all normalized states $|\phi\rangle_A\in H$:
\begin{equation} U\left(|\phi\rangle_A |e\rangle_B\right)=e^{i\alpha_\phi}\,|\phi\rangle_A |\phi\rangle_B
\end{equation}
for some real number $\alpha_\phi$ depending on $\phi$.
\end{theorem}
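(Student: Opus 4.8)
The plan is to use the one property of a unitary operator that the single-qubit arguments above did not fully exploit, namely preservation of inner products. Suppose for contradiction that such a $U$ and normalized $|e\ra_B$ exist. I would pick two arbitrary normalized states $|\phi\ra,|\psi\ra\in H$ and record the two instances of the cloning hypothesis, $U(|\phi\ra_A|e\ra_B)=e^{i\alpha_\phi}|\phi\ra_A|\phi\ra_B$ and $U(|\psi\ra_A|e\ra_B)=e^{i\alpha_\psi}|\psi\ra_A|\psi\ra_B$.

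The core computation is to evaluate the overlap of the two input vectors $|\phi\ra_A|e\ra_B$ and $|\psi\ra_A|e\ra_B$ in two ways. Directly, since $\la e|e\ra=1$, this overlap equals $\la\phi|\psi\ra$. On the other hand, because $U$ is unitary it preserves this overlap, so it also equals the overlap of the outputs, which by the cloning hypothesis is $e^{i(\alpha_\psi-\alpha_\phi)}\,\la\phi|\psi\ra^2$. Equating the two expressions gives
\be \la\phi|\psi\ra=e^{i(\alpha_\psi-\alpha_\phi)}\,\la\phi|\psi\ra^2. \ee

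From here I would set $z=\la\phi|\psi\ra$ and take absolute values, which discards the unknown relative phase and leaves $|z|=|z|^2$, forcing $|z|\in\{0,1\}$. Since $|\phi\ra$ and $|\psi\ra$ were arbitrary, every pair of normalized states would then be either orthogonal or equal up to a phase; exhibiting a single pair with intermediate overlap, for instance $|0\ra$ and $(|0\ra+|1\ra)/\sqrt2$ with overlap $1/\sqrt2$, contradicts this and finishes the proof.

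The step I expect to be the main obstacle is the bookkeeping of the state-dependent phases $e^{i\alpha_\phi}$, which is precisely what makes the naive coefficient-matching style of the single-qubit sketch awkward to generalize. The inner-product formulation sidesteps this: the phases enter only through the modulus-one factor $e^{i(\alpha_\psi-\alpha_\phi)}$, which disappears on passing to $|z|$, so no control over the $\alpha_\phi$ is ever needed. This robustness is also what I would lean on for the promised 2WQC and postselection extension, where the same overlap comparison should survive after absorbing the projection $\la0|$ and the nonzero amplitudes $c_\phi$ into the output states.
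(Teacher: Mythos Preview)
Your argument is correct for the 1WQC statement, but it follows a genuinely different route from the paper. The paper (both in the single-qubit sketch and in the general proof of Theorem~2.2, of which Theorem~2.1 is the special case with empty $H_C$) uses only \emph{linearity}: expand $|\phi\ra=\sum_v\phi_v|v\ra$ in an orthonormal basis, evaluate $U(|\phi\ra|e\ra)$ once via the cloning hypothesis and once via linearity on the basis vectors, and compare coefficients to force $\phi_u\phi_v=0$ for $u\neq v$. You instead exploit \emph{unitarity} via preservation of inner products, arriving at $|\la\phi|\psi\ra|=|\la\phi|\psi\ra|^2$. Both are classical proofs of no-cloning; yours disposes of the state-dependent phases $e^{i\alpha_\phi}$ more elegantly, exactly as you observe. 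The trade-off shows up in your closing remark about the 2WQC extension: the paper's linearity argument transfers unchanged because the map $|\phi\ra\mapsto\la g|_C\,U(|\phi\ra|e\ra|f\ra)$ remains linear even after the projection $\la g|_C$, whereas that projection destroys isometry, so your overlap comparison picks up an uncontrolled residual $\la\Phi_\phi|\Phi_\psi\ra$ from the $|g\ra^\perp$ component and does not go through as written. Thus your inner-product method is cleaner for 1WQC but, contrary to your expectation, is the less robust of the two for the postselection setting.
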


For 2WQC we have available additional third set of qubits - generally from Hilbert space $H_C$ (empty to reduce to 1WQC), usually initialized to $|f\rangle_C=|00\ldots 0\rangle\in H_C$, and finally projected to usually the same $|g\rangle_C=|00\ldots 0\rangle\in H_C$. 

Another difference is that instead of the $|e^{i\alpha_\phi}|=1$ coefficient, postselection/conjugated state preparation can reduce amplitude: for 2WQC we use $c_\phi\in \mathbb{C}$ satisfying $0<|c_\phi|\leq 1$. It degenerates to 1WQC case for empty $H_C$ and $c_\phi=e^{i\alpha_\phi}$. 

This amplitude reduction is caused by its part going to a state orthogonal to $|g\ra_C$, below denoted as $|\Phi_\phi\ra$, generally depending on the state to clone $\phi$. For normalization $|c_\phi|^2+\la \Phi_\phi|\Phi_\phi\ra=1$. This $|\Phi_\phi\ra$ is zeroed by postselection/conjugated state preparation, requiring to finally normalize the state.

\begin{theorem} No-cloning for 2WQC: There is no unitary operator U on $H\otimes H\otimes H_C$ and normalized $|e\rangle_B \in H$, 
$|f\rangle_C, |g\rangle_C\in H_C$, such that for all normalized states $|\phi\rangle_A\in H$:
\begin{equation} \la g|_C\, U\left(|\phi\rangle_A |e\rangle_B\right|f\rangle_C)=c_\phi |\phi\rangle_A |\phi\rangle_B 
\end{equation}
or equivalently:
\begin{equation} U\left(|\phi\rangle_A |e\rangle_B\right|f\rangle_C)=\alpha |\phi\rangle_A |\phi\rangle_B |f\rangle_C + |\Phi_\phi\rangle \
\end{equation}
for some complex number $0<|c_\phi|\leq 1$ depending on $\phi$, also  $|\Phi_\phi\rangle$ in $H\otimes H\otimes |g\rangle^{\perp}$ with third component orthogonal to $|g\rangle_C$.
\end{theorem}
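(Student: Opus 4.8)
The plan is to lift the single-qubit 2WQC argument of Section~\ref{simple} to arbitrary $H$, using only linearity of $U$ together with the fact that the postselection $\langle g|_C$ exactly annihilates the orthogonal remainder $|\Phi_\phi\rangle$. Since a universal cloner must in particular clone every state inside any fixed two-dimensional subspace, I would fix an orthonormal pair $|0\rangle,|1\rangle\in H$ and argue entirely within $\mathrm{span}\{|0\rangle,|1\rangle\}$; the rest of $H$ (and its dimension) then never enters, so no infinite-dimensional subtleties arise.

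First I would fix the two basis columns. Decomposing each output of $U$ along $|g\rangle_C$ and its orthogonal complement, and invoking the cloning hypothesis $\langle g|_C\,U(|\phi\rangle_A|e\rangle_B|f\rangle_C)=c_\phi|\phi\rangle_A|\phi\rangle_B$ for $\phi\in\{0,1\}$, gives
\begin{equation}
U\big(|j\rangle_A|e\rangle_B|f\rangle_C\big)=c_j\,|j\rangle_A|j\rangle_B|g\rangle_C+|\Phi_j\rangle,\qquad |\Phi_j\rangle\in H\otimes H\otimes|g\rangle^{\perp},\quad j\in\{0,1\}.
\end{equation}
Applying $U$ to a normalized superposition $|\phi\rangle=a|0\rangle+b|1\rangle$ and using linearity, then projecting with $\langle g|_C$, the combined remainder $a|\Phi_0\rangle+b|\Phi_1\rangle$ is annihilated because its $H_C$-part is orthogonal to $|g\rangle$, leaving
\begin{equation}
\langle g|_C\,U\big(|\phi\rangle_A|e\rangle_B|f\rangle_C\big)=a\,c_0\,|0\rangle_A|0\rangle_B+b\,c_1\,|1\rangle_A|1\rangle_B.
\end{equation}
The cloning hypothesis forces this same vector to equal $c_\phi|\phi\rangle_A|\phi\rangle_B=c_\phi\big(a^2|0\rangle|0\rangle+ab|0\rangle|1\rangle+ba|1\rangle|0\rangle+b^2|1\rangle|1\rangle\big)$; matching the coefficient of $|0\rangle_A|1\rangle_B$, which is absent on the left, yields $c_\phi\,ab=0$. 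Since $|c_\phi|>0$ by hypothesis this forces $ab=0$, so $|\phi\rangle$ can only be a basis state, and taking a genuine superposition such as $a=b=1/\sqrt2$ contradicts the assumed universality of $U$.

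The step I expect to be the crux is this projection: the entire power of the extension lives in the register $H_C$ and its postselection, and the argument succeeds precisely because $\langle g|_C$ removes the $\phi$-dependent orthogonal part $|\Phi_\phi\rangle$ \emph{linearly and completely}, so the off-diagonal $|0\rangle_A|1\rangle_B$ term can never be manufactured from a linear combination of the two diagonal basis outputs. I would also stress why the alternative inner-product proof of 1WQC no-cloning does not transfer cleanly here: unitarity only gives $\langle\phi|\psi\rangle=\overline{c_\phi}c_\psi\,\langle\phi|\psi\rangle^2+\langle\Phi_\phi|\Phi_\psi\rangle$, and the surviving overlap $\langle\Phi_\phi|\Phi_\psi\rangle$ is unconstrained beyond Cauchy--Schwarz, so comparing a single pair of states is insufficient. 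Routing the argument through linearity on a full two-parameter family of superpositions, rather than through orthogonality of one pair, is what makes the obstruction unavoidable.
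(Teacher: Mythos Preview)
Your proposal is correct and follows essentially the same linearity argument as the paper: apply the cloning hypothesis to basis states, use linearity of $U$ to compute $\langle g|_C U$ on a superposition, compare with the direct cloning output, and conclude that the off-diagonal coefficient must vanish. The only cosmetic differences are that the paper expands $|\phi\rangle$ in a full orthonormal basis $B_H$ of $H$ rather than restricting to a two-dimensional subspace, and that your write-up makes the decomposition $U(|j\rangle|e\rangle|f\rangle)=c_j|j\rangle|j\rangle|g\rangle+|\Phi_j\rangle$ and the annihilation of $|\Phi_j\rangle$ by $\langle g|_C$ explicit, whereas the paper jumps straight to the projected identity; your added remark on why the inner-product version of the 1WQC proof does not transfer is commentary the paper omits.
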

\begin{proof} Let us assume existence of such $U$ and lead to contradiction. As in simplified proof above, let us expand $\phi$ in orthonormal basis $B_H$ of $H=H_A=H_B$, e.g. 0/1 qubits:
\be |\phi\rangle =\sum_{v\in B_H} \phi_v |v\ra \ee
The $U$ operator should allow to clone any $v$ from $B_H$ basis:
\be \la g|_C\, U\left(|v\rangle_A |e\rangle_B\right|f\rangle_C)=c_v |v\rangle_A |v\rangle_B 
\ee
Now as in (\ref{sim}), let us clone $\phi\in H$ in two ways: 
\be \la g|_C\, U\left(|\phi\rangle_A |e\rangle_B\right|f\rangle_C)=\ee
$$=c_\phi |\phi\ra_A |\phi\ra_B = c_\phi \sum_{u,v\in B_H} \phi_u \phi_v |u\ra_A |v\ra_B $$
$$=\la g|_C\, U\left(\sum_{v\in B_H} \phi_v |v\ra_A \right)|e\rangle_B|f\rangle_C=\sum_{v\in B_H} \phi_v c_v |v\ra_A |v\ra_B$$
Comparing both, for $u\neq v$ we get $\phi_v \phi_u = 0$, allowing for at most one $\phi_v$ coordinate being nonzero, what is in contradiction with claim that $U$ can clone any $\phi\in H$.
\end{proof}

\section{Conclusions and further work}
This article concludes discussion on no-cloning theorem for 2WQC and postselection  - in contrast to CTC~\cite{CTC}, remaining valid here, for example forbidding this type of attacks on quantum cryptographic protocols like BB84, also attacks trying to clone some percentage of qubits using postselection.

As further work it should be generalized like for 1WQC e.g. for mixed states~\cite{mixed}, also analyzed for imperfect cloning~\cite{imperfect}. Also there should be investigated other limitations of 1WQC if they remain valid for 2WQC, what generally not needs to be true as it extends the set of available operations.

\begin{figure}[t!]
    \centering
        \includegraphics{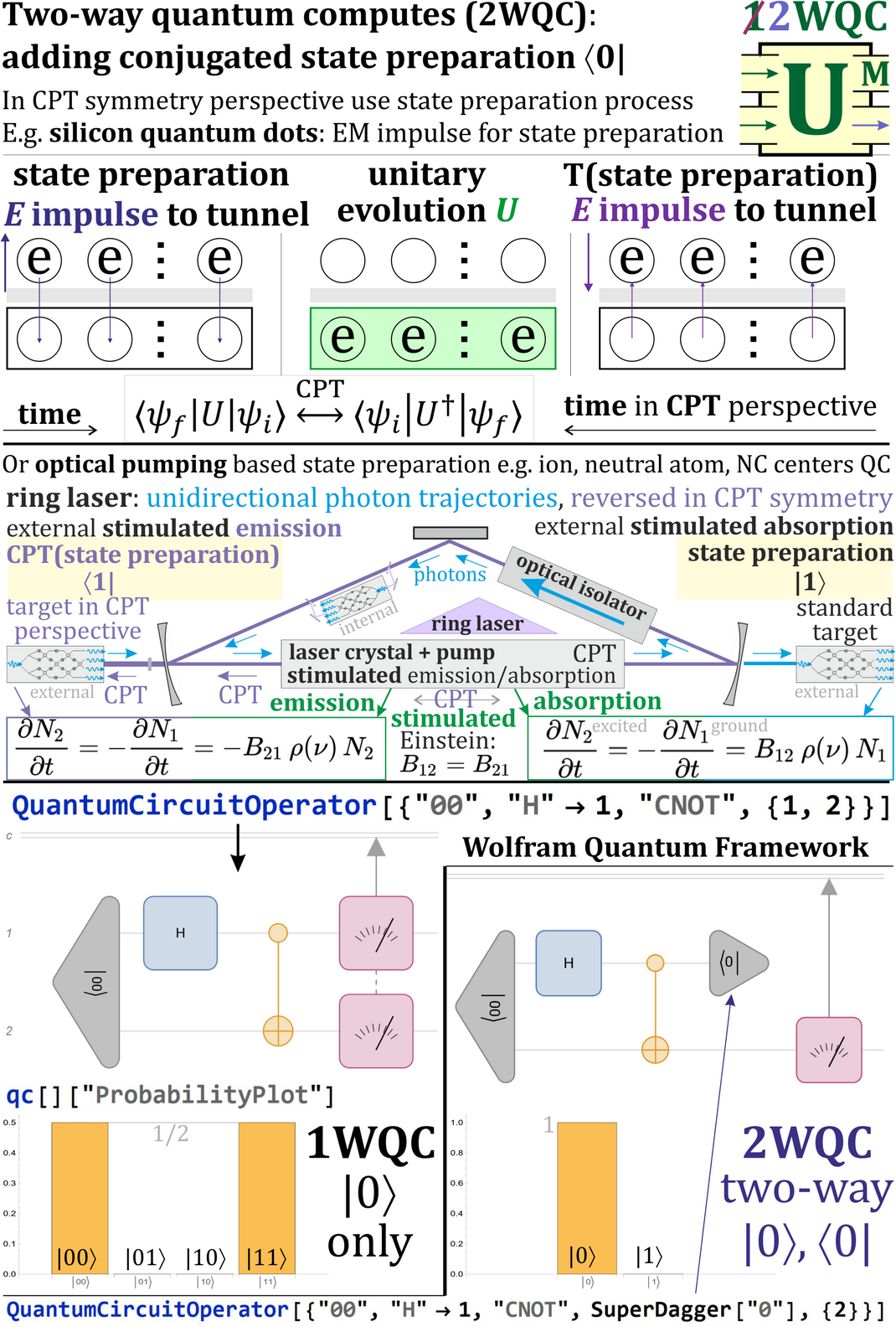}
        \caption{2WQC~\cite{2WQC} proposes to extend 1WQC set of allowed operation by conjugated state preparation $|0\rangle$ (like postselection but by physical constraints) - performing state preparation process in CPT perspective, e.g. by reversing used EM impulses (top), or using ring laser with optical isolator (center). Bottom: simple example of implementation in Wolfram Quantum Framework. }
        \label{2WQC}
\end{figure}
\section*{Acknowledgments} 
This article was written during QInter 2024 program of QWorld, which helped with organization of this collaboration.

\bibliographystyle{IEEEtran}
\bibliography{cites}
\end{document}